\newcommand{\dom}[1]{\mathrm{dom}(#1)}
\newcommand{\cod}[1]{\mathrm{cod}(#1)}
\newcommand{\dirac}[2]{\delta_{#1}(#2)}
\newcommand{\expectint}[3]{\int_{#2}\: #3 \: #1}
\newcommand{\weight}{\mathbb{R}_{\geq 0}}
\newcommand{\extWeight}{[0, \infty]}
\newcommand{\mswap}[2]{\mathbf{swap}_{#1 \otimes #2}}
\newcommand{\mcopy}[1]{\mathbf{copy}_{#1}}
\newcommand{\mdel}[1]{\mathbf{del}_{#1}}
\newcommand{\Measure}[1]{\mathbb{M}\left(#1\right)}
\newcommand{\prior}[3]{p_{\scriptsize #1}(#2 \mid #3)}
\newcommand{\Prob}[1]{\mathbb{P}\left(#1\right)}
\newcommand{\C}{\mathcal{C}}
\newcommand{\BorelStoch}{\mathbf{BorelStoch}}
\newcommand{\CoPara}[1]{\mathbf{CoPara}_{\otimes}(#1)}
\newcommand{\CoParaB}[1]{\mathbf{CoPara}_{\bullet}(#1)}
\newcommand{\Dens}{\mathbf{Dens}}
\newcommand{\FinHyp}{\mathbf{FinHyp}}
\newcommand{\FreeCD}[1]{\mathbf{FreeCD}_{#1}}
\newcommand{\FreeMarkov}[1]{\mathbf{FreeMarkov}_{#1}}
\newcommand{\dJoint}{\partial\mathbf{Joint}}
\newcommand{\Joint}{\mathbf{Joint}}
\newcommand{\Hyp}{\mathbf{Hyp}}
\newcommand{\hyp}[1]{\mathrm{hyp}(#1)}
\newcommand{\M}{\mathcal{M}}
\newcommand{\Meas}{\mathbf{Meas}}
\newcommand{\Para}[1]{\mathbf{Para}_{\otimes}(#1)}
\newcommand{\ParaB}[1]{\mathbf{Para}_{\bullet}(#1)}
\newcommand{\QBS}{\mathbf{QBS}}
\newcommand{\Sbs}{\mathbf{Sbs}}
\newcommand{\sem}[1]{\left\llbracket #1 \right\rrbracket}
\newcommand{\sfKrn}{\mathbf{sfKrn}}
\newcommand{\sfStoch}{\mathbf{sfStoch}}
\newcommand{\Stoch}{\mathbf{Stoch}}
\newtheorem{definition}{Definition}
\newtheorem{example}{Example}
\newtheorem{proposition}{Proposition}
\newtheorem{theorem}{Theorem}
\newtheorem{corollary}[theorem]{Corollary}
\title{String Diagrams with Factorized Densities}
\author{Eli Sennesh
\institute{Khoury College of Computer Science\\
Northeastern University\\
Boston, Massachusetts, United States of America}
\email{sennesh.e@northeastern.edu}
\and
Jan-Willem van de Meent
\institute{Amsterdam Machine Learning Lab (AMLab)\\
University of Amsterdam\\
Amsterdam, the Netherlands}
\email{\quad j.w.vandemeent@uva.nl}
}
\begin{document}
\maketitle

\begin{abstract}
A growing body of research on probabilistic programs and causal models has highlighted the need to reason compositionally about model classes that extend directed graphical models. Both probabilistic programs and causal models define a joint probability density over a set of random variables, and exhibit sparse structure that can be used to reason about causation and conditional independence. This work builds on recent work on Markov categories of probabilistic mappings to define a category whose morphisms combine a joint density, factorized over each sample space, with a deterministic mapping from samples to return values. This is a step towards closing the gap between recent category-theoretic descriptions of probability measures, and the operational definitions of factorized densities that are commonly employed in probabilistic programming and causal inference.
\end{abstract}

% \vspace{-1em}
\section{Introduction}
\label{sec:intro}
% \vspace{-0.5em}

Statisticians and machine learners analyze observed data by synthesizing models of those data. These models take a variety of forms, with several of the most widely used being directed graphical models, probabilistic programs, and structural causal models (SCMs). Applications of these frameworks have included cognitive modeling~\cite{Chater2006,Lake2017}, simulation-based inference~\cite{Cranmer2020}, and model-based planning~\cite{Friston2017,Levine2018}. Unfortunately, the richer the model class, the weaker the mathematical tools available to reason rigorously about it: SCMs built on linear equations with Gaussian noise admit easy inference, while graphical models have a clear meaning and a wide array of inference algorithms but encode a limited family of models. Probabilistic programs can encode any computably sampleable distribution, but the definition of their densities commonly relies on operational analogies with directed graphical models.

In recent years, category theorists have developed increasingly sophisticated ways to reason diagrammatically about a variety of complex systems. These include (co)parameterized categories of systems that may modify their parameters~\cite{Capucci2021} and hierarchical string diagrams for rewriting higher-order computations~\cite{AlvarezPicallo2022}. Recent work on Markov categories of probabilistic mappings has provided denotational semantics to probabilistic programs \cite{Staton2017,Heunen2017}, abstract categorical descriptions of conditioning, disintegration, sufficient statistics, conditional independence \cite{ChoJacobs2019,Fritz2020}, and generalized causal models~\cite{FritzKlingler2023,FritzLiang2023}.

This paper will take a step towards closing the gap between categorical probability and operational practice in probabilistic programming and applied Bayesian statistics. Denotational semantics for probabilistic programs define a measure over return values of a program given its inputs \cite{Staton2017,Heunen2017}. To reason about inference methods, practitioners need to consider the joint distribution of internal random variables, as well as its density's factorization into conditionals. Section~\ref{sec:background} will review basic definitions from probability and measure theory necessary to do so. Section~\ref{sec:gadgets} will then develop a category whose morphisms express joint (rather than marginal) distributions with factorized joint densities. Section~\ref{sec:diagram_scm} will show that generalized causal models can factorize these densities and admit interventional and counterfactual queries. Section~\ref{sec:discussion} will work through a pair of examples and summarize the paper's developments.

% This paper is structured as follows. Section~\ref{sec:background} reviews basic definitions of probability and measure theory, then gives a first category of probabilistic mappings and defines the abstract, categorical setting that encompasses that concrete category. Section~\ref{sec:gadgets} uses that abstract setting to define a category $\Joint$ whose morphisms map from a parameter into an internal joint distribution and then deterministically to an output; a concrete subcategory of $\Joint$ will support probability density functions over such internal joint distributions. Section~\ref{sec:diagram_scm} will review recent work on generalized causal models and interpret them into morphisms with joint densities. Section~\ref{sec:discussion} summarizes our findings.

Appendix~\ref{sec:measure_theory} reviews the measure-theoretic concepts employed here; Appendix~\ref{sec:para_copara} reviews parametric and coparametric categories~\cite{Capucci2021}; and Appendix~\ref{sec:free_constructions} reviews free copy/delete and Markov categories.
% Appendix~\ref{sec:pushback} documents a useful property of probability kernels. Appendix~\ref{sec:qbs} extends our results to the category of quasi-Borel spaces~\cite{Staton2017}.  Appendix~\ref{sec:unnormalized_extension} extends our results to unnormalized densities.

% \vspace{-1em}
\paragraph{Notation} The notation $(\C, \otimes, I)$ will range over strict symmetric monoidal categories (SMC's for short). We denote composition as $g \circ f$ or equivalently as $f\fatsemi g$, write $(X^{*}, \odot, ())$ for the finite list monoid on $X$'s, and overload $\otimes$ and $\oplus$ for direct products and sums. We draw string diagrams from the top (domain) to the bottom (codomain), showing products from left to right. Given a Markov category $\C$ we will draw deterministic maps in $\C_{det} \subset \C$ (which commute with $\mcopy{}$) as rectangles and stochastic ones as ellipses/circles. We nest brackets with parentheses $([])$ equivalently.

% \vspace{-1em}
\section{Background: abstract and concrete categorical probability}
\label{sec:background}
% \vspace{-0.5em}
This section will review the background on which the rest of the paper builds. Categorical probability begins from an abstract notion of nondeterminism: processes with a notion of ``independent copies''. Categorical probability then refines from a setting in which those nondeterministic processes ``happen'' whether observed or not, to a refined setting in which processes only ``happen'' when they affect an observed output. Categories of probability kernels, taking into account the details of measure theory (see Appendix~\ref{sec:measure_theory}), will form a concrete instance of the abstract setting.

% It abstracts away the details of measure theory to define categories of nondeterministic (including probabilistic) processes from first principles.

% Section~\ref{subsec:measure_theory} begins with standard definitions for a number of common terms from measure theory and probability, before building up to a first category of probabilistic mappings. Section~\ref{subsec:categorical_probability} abstracts away the details of measure theory to define categories of nondeterministic (including probabilistic) processes from first principles.
% Section~\ref{subsec:categorical_probability} reviews categorical approaches to probability and measure theory. Section~\ref{subsec:causal_models} overviews a generalization of causal models from directed acyclic graphs to string diagrams. 

% \subsection{Categories of nondeterministic processes}
% \label{subsec:categorical_probability}

Definition~\ref{def:cdcat} represents nondeterministic processes abstractly. A copy/delete category is an SMC whose morphisms generate information which can be copied or deleted freely.
\begin{definition}[Copy/delete category]
\label{def:cdcat}
A copy-delete or \emph{CD-category} is an SMC $(\C, \otimes, I)$ in which every object $X \in Ob(\C)$ has a commutative comonoid structure $\mcopy{X}: \C(X, X \otimes X)$ and $\mdel{X}: \C(X, I)$ which commutes with the monoidal product structure.
\end{definition}
Definition~\ref{def:markov_cat} then refines the abstract setting of CD categories to require that deleting the only result of a nondeterministic process is equivalent to deleting the process itself.
\begin{definition}[Markov category]
\label{def:markov_cat}
A \emph{Markov category} is a semicartesian CD-category $(\C, \otimes, I)$, so that the comonoidal counit is natural ($\forall f: \C(Z, X), f \fatsemi \mdel{X} = I$) and makes $I \in Ob(\C)$ a terminal object.
\end{definition}
Example~\ref{ex:meas} gives the canonical Markov category, consisting of measurable spaces and maps.
\begin{example}[Measurable spaces and functions form a category~\cite{Tao2011}]
\label{ex:meas}
Measurable spaces and functions form a Cartesian category $\Meas$ with objects $(X, \Sigma_X) \in Ob(\Meas)$ consisting of sets $X \in Ob(\mathbf{Set})$ and their $\sigma$-algebras\footnote{Collections of ``measurable subsets'' closed under complements, countable unions, and countable intersections} $\Sigma_{X}$ and morphisms $\Meas((Z, \Sigma_Z), (X, \Sigma_X)) = \left\{ f \in X^{Z} \mid \forall \sigma_X \in \Sigma_X, f^{-1}(\sigma_X) \in \Sigma_Z \right\}$ consisting of measurable functions between measurable spaces.
\end{example}
$\Meas$ acquires its Markov comonoid structure from its Cartesian structure.
% \begin{proposition}[Strictification]
% \label{prop:markov_strictification}
% Every Markov category is comonoid equivalent to a strict Markov one.
% \end{proposition}
Definition~\ref{def:stoch} below provides the canonical Markov category for measure-theoretic probability.
\begin{definition}[Category of measurable spaces and Markov kernels]
\label{def:stoch}
The category $\Stoch = Kl(\mathbb{P})(\Meas)$ of measurable spaces and Markov kernels is the Kleisli category of the Giry monad~\cite{Giry1982} over $\Meas$, having measurable spaces as objects and Markov kernels (Definition~\ref{def:markov_kernel}) between them as morphisms.
\end{definition}
Much of this paper will require a \emph{strict} Markov category as in Definition~\ref{def:strict_markov} below.
\begin{definition}[Strict Markov category]
\label{def:strict_markov}
A \emph{strict} Markov category is one whose underlying SMC (with comonoid structure thrown away) is strict monoidal (its associator and unitors are identity).
\end{definition}
Theorem 10.17 in Fritz~\cite{Fritz2020} showed that every Markov category is comonoid equivalent to a strict one, licensing us to work with strictified Markov categories $\Meas$ and $\Stoch$ without further concern.

% Markov categories model nondeterministic processes whose operation generates information in some sense. While some theorems require a causal Markov category in the sense of Fritz~\cite{Fritz2020}, $\Stoch$ (and its deterministic ``skeleton'' $\Meas$) is in fact causal\footnote{Example 11.35 of Fritz~\cite{Fritz2020}}. We will therefore proceed without further concern for the causality property.

% Definition~\ref{def:causal} requires that this ``information'' never flows from a codomain to a domain.
% \begin{definition}[Causal Markov category]
% \label{def:causal}
% A Markov category $\C$ is \emph{causal} if two compositions maintain their equality under the string-diagram surgery of copying an upstream wire:
% \begin{align}
% \label{eq:causal}
%     \usesatexfig{causalpremisel} = \usesatexfig{causalpremiser} \implies \usesatexfig{causalconclusionl} = \usesatexfig{causalconclusionr}.
% \end{align}
% \end{definition}
% As with strictness, Example 11.35 of Fritz~\cite{Fritz2020} showed that $\Stoch$ is indeed causal. Since $\Meas$ is just the deterministic ``skeleton'' of $\Stoch$, it too is causal.

Unless otherwise mentioned, this paper will work with $\Meas$ and $\Stoch$ as strict, causal Markov categories\footnote{The latter property is shown in Example 11.35 of Fritz~\cite{Fritz2020}}. When the ambient category and $\sigma$-algebra is clear from context, $f: Z \leadsto X$ will abbreviate $f: \Stoch((Z, \Sigma_Z), (X, \Sigma_X))$. In the concrete case of $\Stoch$, measurable maps give the deterministic maps $\Meas \simeq \Stoch_{det} \subseteq \Stoch$. While Markov categories provide a compositional setting for nondeterministic processes, Markov kernels in these categories only provide probability measures for their outputs given their inputs. By design, they ``forget'' (i.e.~marginalize over) all intermediate randomness in long chains of composition. Section~\ref{sec:gadgets} will build up a novel setting that ``remembers'' (i.e.~does not marginalize over) joint distributions over all intermediate random variables through long chains of composition, and will show when there exist probability densities with respect to the joint distributions thus formed.

% \vspace{-1em}
\section{Joint distributions and densities for string diagrams}
\label{sec:gadgets}
% \vspace{-0.5em}

Statisticians cannot utilize input-output (parameter to distribution) mappings alone, except for maximum likelihood estimation. Instead, these typically appear as conditional probability distributions in a larger probability model. This larger model necessarily encodes a \emph{joint distribution} over all relevant random variables, both those observed as data and the \emph{latent} variables that give rise to observations. Practical probabilistic reasoning then consists of applying the laws of probability (product law for conjunctions, sum law for disjunctions, marginalization for unconditional events, Bayesian inversion) to numerical \emph{densities} representing the joint distribution. This section will model the algebra of joint probability densities in a novel Markov category $\Joint$ constructed on the underlying Markov category $\Stoch$.

Section~\ref{subsec:joint} will first review an abstraction for categories in which morphisms act by ``pushing forward'' an internal ``parameter space'' and then instantiate that abstraction on a Markov category to yield a Markov category $\Joint$ of joint distributions. Section~\ref{subsec:sbs_densities} will give the conditions for a concrete Markov kernel to admit a density. Section~\ref{subsec:joint_densities} will use those preliminaries to define a Markov category whose morphisms generate and push forward a joint probability density.

% Section~\ref{subsec:weights} will import an additional tool from applied probability, representing unnormalized densities for distributions with analytically intractable normalizing constants.

% \vspace{-1em}
\subsection{Accumulating random variables into joint distributions}
\label{subsec:joint}
% \vspace{-0.5em}

Structural graphical models and probabilistic programs separate between the functions and variables they allow into deterministic and random ones~\cite{Pearl2012}. Representing deterministic mechanisms categorically requires assuming that each nondeterministic process consists of a deterministic mechanism and a (potentially conditional) distribution over a random variable. This subsection will exploit ``cybernetic'' constructions (overviewed in Appendix~\ref{sec:para_copara}) for parameterization of deterministic mechanisms by random inputs and ``writing out'' of internal joint distributions as coparameters.

Proposition~\ref{prop:stoch_m_actegory} will show the concrete category $\Stoch$ supports those constructions.
\begin{proposition}[$\Stoch$ forms a symmetric monoidal $\M$-actegory\footnote{Definition~\ref{def:M_actegory} in Appendix~\ref{sec:para_copara}}]
\label{prop:stoch_m_actegory}
The concrete category $\Stoch$ forms a symmetric monoidal $\M$-actegory for $\M = \Stoch$ and $\C = \Stoch$.
\end{proposition}
\begin{proof}
% \vspace{-0.5em}
Any SMC $\C$ forms a symmetric monoidal $\M$-actegory for $\M = \C$ with the product functor $\M \bullet \C = \C \times \C$ from the product category. Any Markov category is also an SMC, and so $\Stoch$ suffices.
\end{proof}

In this trivial case, Definition~\ref{def:para_full} simplifies so that Definition~\ref{def:para} will form an SMC.
\begin{definition}[Symmetric monoidal parametric categories]
\label{def:para}
Given a strict SMC \((\C, \otimes, I)\), the symmetric monoidal \emph{parametric (bi)category} \(\Para{\C}\) has as objects those of $\C$ and as morphisms the pairs \(\Para{\C}(A, B) = \left\{ (M, k) \in Ob(\C) \times \C(M \otimes A, B) \right\}\). Composition for the two parameterized morphisms \((M, k): \Para{\C}(A, B)\) and \((M', k'): \Para{\C}(B, C)\) consists of \((M' \otimes M, k' \circ (id_{M'} \otimes k)))\); identities on objects \(A\) consist of \((I, id_{A})\); and $(\Para{\C}, \otimes, I)$ inherits its monoidal structure from $\C$\footnote{see Proposition~\ref{prop:para_copara_monoidal}}.
\end{definition}
$\Para{\Stoch}$ will suffice for Definition~\ref{def:joint_kernel} to model a Markov kernel over a joint distribution. The jointly random \emph{residual} $(M, \Sigma_M) \in Ob(\Stoch)$ will parameterize the deterministic map $k$.
\begin{definition}[Joint Markov kernel]
\label{def:joint_kernel}
A \emph{joint Markov kernel} is a pair of a Markov kernel with a deterministic mapping parameterized by that Markov kernel, up to permutation of residual components
\begin{align*}
    \Joint(Z, X) &:= \left\{ (f, [M, k]): \Stoch(Z, M) \times \Para{\Stoch_{det}}(Z, X) \right\}.
\end{align*}
\end{definition}
As implied by the hom-set notation above, joint Markov kernels will form a category of nondeterministic processes. Since the residuals of joint distributions only contribute to downstream processes through their local outputs, Theorem~\ref{thm:joint} will show this to be a copy/delete category.

% Unlike Markov kernels, joint Markov kernels do not compose by marginalizing out an intermediate variable. Instead, they accumulate joint distributions over their residuals through composition.
\begin{theorem}[Joint Markov kernels form a copy/delete category]
\label{thm:joint}
$\Joint$ is a strict copy/delete category having $Ob(\Joint) = Ob(\Stoch)$ and joint Markov kernels as morphisms.
\end{theorem}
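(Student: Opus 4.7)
The plan is to verify the strict Markov category axioms directly by unfolding $\Joint{\M}$ and performing string-diagram manipulations in the underlying CD structure of $\M$. Concretely, I would first make all structural data explicit, then discharge the algebraic axioms by diagram chases in $\M$, and finally address the semicartesian condition.

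First, I would specify composition, identities, and the monoidal and CD structure explicitly. The composite of $([M_1, k_1], f_1) \in \Joint{\M}(Z, X)$ with $([M_2, k_2], f_2) \in \Joint{\M}(X, Y)$ takes residual $M_1 \otimes M_2$, stochastic part the kernel $Z \leadsto M_1 \otimes M_2$ that copies $Z$, runs $f_1$ to emit $M_1$, reconstructs $X$ via $k_1$, and then runs $f_2$ on $X$ to emit $M_2$, and deterministic part the $\M_{det}$-arrow $(M_1 \otimes M_2) \otimes Z \to Y$ given by $k_2 \circ (\mathrm{id}_{M_2} \otimes k_1) \circ (\mswap{M_1}{M_2} \otimes \mathrm{id}_Z)$. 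The identity on $Z$ is $([I, \mathrm{id}_Z], \mdel{Z})$; the monoidal product tensors residuals, stochastic parts, and deterministic parts (with a symmetry interleaving the $M_i$ and $Z_i$); and the copy, delete, and swap maps of $\Joint{\M}$ lift from $\M_{det}$ with trivial residual $I$ and stochastic part $\mdel{Z}$.

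Next I would discharge the unit, associativity, bifunctoriality, and commutative comonoid axioms. Unitality reduces to the counit equation for $\mdel{Z}$ together with strict monoidal unitality. Associativity splits into two parts: the deterministic side reduces to associativity of $\Para{\M_{det}}$, and the stochastic side requires showing that both bracketings yield the same Markov kernel $Z \leadsto M_1 \otimes M_2 \otimes M_3$, which follows from coassociativity of $\mcopy{Z}$, the naturality of copy against deterministic morphisms, and bifunctoriality of $\otimes$ in $\M$. The commutative comonoid axioms for $(\mcopy{Z}, \mdel{Z})$ and bifunctoriality of $\otimes$ transfer directly from $\M_{det}$, since the lifted morphisms have trivial residual and agree with their sources under the interpretation of Equation~\ref{eq:jointMarkov}.

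The main obstacle is the semicartesian requirement that $I$ be strictly terminal. Syntactically $\Joint{\M}(Z, I)$ contains every pair $([M, \mdel{M \otimes Z}], f)$ for arbitrary $M \in \M$ and $f: Z \to M$, which is not literally a singleton. The natural fix, echoing the informal remark that residuals only contribute to downstream processes through their local outputs, is to quotient hom-sets by the equivalence identifying joint kernels with equal interpretation $\llbracket \cdot \rrbracket$ in $\M$; a more structured alternative is a reparametrization relation $([M, k], f) \sim ([M', k'], f')$ witnessed by a deterministic isomorphism $\phi: M \to M'$ with $k = k' \circ (\phi \otimes \mathrm{id})$ and $f \fatsemi \phi = f'$. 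Under either quotient all morphisms $Z \to I$ collapse to a canonical representative, and the remaining Markov axioms follow by routine diagram chases.
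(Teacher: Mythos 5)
Your structural data coincides with the paper's: the same composite (Equation~\ref{eq:joint_comp}), the same identity $([I, id_Z], \mdel{Z})$, the same lifted copy/delete/swap maps with trivial residual, and the same tensor of residuals for the monoidal product. The paper's proof stops at exhibiting this structure and leaves the unit, associativity, bifunctoriality, and comonoid axioms implicit, so up to that point you are simply carrying out in detail the verification the paper waves at, and your reductions (to $\Para{\M_{det}}$ for the deterministic side, to coassociativity and naturality of $\mcopy{Z}$ for the stochastic side) are the right ones.

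Where you genuinely diverge is on the semicartesian condition, and there you have put your finger on something the paper's proof does not address at all: as literally defined, $\Joint{\M}(Z, I)$ contains $([M, \mdel{M \otimes Z}], f)$ for every residual $M$ and every $f: Z \leadsto M$, and post-composing a general $([M,k],f)$ with the lifted $\mdel{X}$ yields $([M, \mdel{M \otimes Z}], f)$ rather than $([I, \mdel{Z}], \mdel{Z})$, so the counit is not natural and $I$ is not terminal. That is a real gap, and you are right that some identification of morphisms is needed. However, neither of your proposed quotients works as stated. Quotienting by equal interpretation $\llbracket \cdot \rrbracket$ does make $I$ terminal, but it identifies any two joint kernels with the same pushforward, collapsing $\Joint{\M}$ onto its image in $\M$ and discarding exactly the residual and joint-distribution data the construction exists to retain (it would also trivialize the extraction of densities over $M$ in Theorem~\ref{thm:joint_density_sfkrn}). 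The reparametrization relation witnessed by a deterministic \emph{isomorphism} $\phi: M \rightarrow M'$ errs in the opposite direction: $([M, \mdel{M \otimes Z}], f)$ and $([I, \mdel{Z}], \mdel{Z})$ are not related unless $M \cong I$, so morphisms into $I$ still do not collapse and terminality still fails. A workable fix must sit strictly between the two (for instance, identifying kernels up to deterministic reparametrizations of the residual that need not be invertible, or conceding that the unquotiented $\Joint{\M}$ is only a CD category); as written, your final step does not close the gap you correctly identified.
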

\begin{proof}
$\Joint$ must admit the typical requirements of a category as well as deterministic, copy-delete symmetric monoidal structure. We can demonstrate the necessary deterministic structure by exhibiting joint kernels $(I, k): \Para{\Stoch_{det}}(Z, X) \implies ([I, k], \mdel{Z}): \Joint(Z, X)$ for any noiseless causal mechanism.
% \begin{align*}
%     (I, k): \Para{\Stoch_{det}}(Z, X) \implies ([I, k], \mdel{Z}): \Joint(Z, X).
%     % \inference{
%     %     (I, k): \Para{\Stoch_{det}}(Z, X)
%     % }{
%     %     ([I, k], \mdel{Z}): \Joint(Z, X)
%     % }.
% \end{align*}
% The deterministic Markov kernel $\mdel{Z}$ admits a discrete Dirac measure over the discrete unit space $I$ as its codomain, so that $\mdel{Z} \in S_{(Z, \Sigma_Z), I}$.
Setting $k=\mcopy{X}$ or $k=\mdel{Z}$ yields the necessary copy and delete maps. Setting $k=\mswap{Z}{X}$ gives the necessary symmetry of the monoidal product. It remains to show that $\Joint$ has a monoidal product over morphisms and that its hom-sets are closed under composition.

Given two joint Markov kernels $(f_1, [M_1, k_1]): \Joint(Z, X)$ and $(f_2, [M_2, k_2]): \Joint(W, Y)$, their monoidal product is formed by pairing their causal mechanisms and noise distributions
\begin{align*}
    % (f_1, [M_1, k_1]) &: \Joint(Z, X) \\
    % (f_2, [M_2, k_2]) &: \Joint(W, Y) \\
    % (f_1, [M_1, k_1]) \otimes (f_2, [M_2, k_2]) &: \Joint(Z \otimes W, X \otimes Y) \\
    (f_1, [M_1, k_1]) \otimes (f_2, [M_2, k_2]) &:= (f_1 \otimes_{\Stoch} f_2, [M_1, k_1] \otimes_{\Para{\Stoch_{det}}} [M_2, k_2]): \Joint(Z \otimes W, X \otimes Y).
    % (f_1, [M_1, k_1]) \otimes (f_2, [M_2, k_2]) &:= \left(\usesatexfig{jointProdStoch}, \left[M_1 \otimes M_2, \usesatexfig{jointProdDet} \right]\right).
\end{align*}
Composing two joint Markov kernels $(f_1, [M_1, k_1]): \Joint(Z, X)$ and $(f_2, [M_2, k_2]): \Joint(X, Y)$ along their intermediate object involves composing their parametric maps and taking a conditional product of their stochastic kernels to form the composite joint distribution
% \vspace{-0.5em}
\begin{align}
    \label{eq:joint_comp}
    (f_1, [M_1, k_1]) \fatsemi (f_2, [M_2, k_2]) &:= \left(\usesatexfig{jointCompStoch}, \left(M_1 \otimes M_2, \usesatexfig{jointCompDet} \right) \right): \Joint(Z, Y).
\end{align}
% \vspace{-1em}
\end{proof}

Anything called a \emph{joint} Markov kernel ought to expose its internal joint distribution in a structured way. Definition~\ref{def:copara} will link the composition of joint distributions to the cybernetics literature.
% represent how these joint distributions compose and interact.
\begin{definition}[Symmetric monoidal coparametric categories\footnote{See Definition~\ref{def:copara_full} for the more general case}]
\label{def:copara}
Given a strict SMC \((\C, \otimes, I)\), the symmetric monoidal \emph{coparametric (bi)category} \(\CoPara{\C}\) has as objects those of $\C$ and as morphisms the pairs \(\CoPara{\C}(A, B) = \left\{ (M, k) \in Ob(\C) \times \C(A, M \otimes B) \right\}\) of a residual object and a morphism from $A$ to $M \otimes B$. Composition for the morphisms \((M, k): \CoPara{\C}(A, B)\) and \((M', k'): \CoPara{\C}(B, C)\) consists of \((M' \otimes M, (id_{M} \otimes k') \circ k))\); identities on objects \(A\) consist of \((I, id_{A})\); and $(\CoPara{\C}, \otimes, I)$ inherits its monoidal structure from $\C$\footnote{See Proposition~\ref{prop:para_copara_monoidal}}.
\end{definition}
$\Joint$ serves to work with joint distributions compositionally rather than marginalizing them out. Theorem~\ref{thm:joint_copara} will show how mapping from $\Joint \rightarrow \CoPara{\Stoch}$ exposes the full joint distribution.
\begin{theorem}[Joint Markov kernels coparameterize joint distributions]
\label{thm:joint_copara}
There exists a full, identity-on-objects Markov functor $\sem{\cdot}: \Joint \rightarrow \CoPara{\Stoch}$ which maps the residual of a joint Markov kernel in $\Joint$ onto the residual of its image in $\CoPara{\Stoch}$.
\end{theorem}
\begin{proof}
% \vspace{-0.5em}
The required functor sends morphisms $\sem{\cdot}: \Joint(Z, X) \rightarrow \CoPara{\Stoch}(Z, X)$ to coparameterized Markov kernels whose codomain is the joint distribution over the residual and the output
\begin{align*}
    \sem{(f, [M, k])} &= \left(M, \usesatexfig{jointCoPara}\right).
\end{align*}
This functor is trivially full, since any morphism $f: \CoPara{\Stoch}(Z, X)$ embeds trivially into $\Joint$ by setting the corresponding deterministic $k=id_{M \otimes X}$. It is not faithful: multiple ``divisions of labor'' between $f$ and $k$ can yield the same Markov kernel in $\CoPara{\Stoch}$.
\end{proof}

Corollary~\ref{cor:joint_markov} will give the trivial extension of marginalizing over the residual.
\begin{corollary}[Marginalizing a joint Markov kernel's residual yields a Markov kernel]
\label{cor:joint_markov}
There exists a full, identity-on-objects functor $J: \Joint \rightarrow \Stoch$.
\end{corollary}
\begin{proof}
The required functor $J$ just applies $\sem{\cdot}$ and then forgets the residual by composition with $\mdel{M}$: its action on morphisms is $J((f, [M, k])) = \sem{(f, [M, k])} \fatsemi (\mdel{M} \otimes id_{X})$.
% The image of the composite $\sem{\cdot} = S \circ J$ sends $z \in Z$ through the conditional distribution $f$ (diagrammed as round) to generate a $m\in M$ that the deterministic $k$ (diagrammed as square) combines with $z$
% % to produce the result in $X$
% \begin{align*}
%     % \label{eq:jointMarkov}
%     \llbracket (f, [M, k]) \rrbracket &= \usesatexfig{jointMarkov}.
% \end{align*}
\end{proof}

This subsection has considered arbitrary, unstructured joint distributions $\Joint$. Section~\ref{subsec:sbs_densities} will examine the special case in which the residual object is a standard Borel space and the conditional distribution into it meets the necessary conditions to admit a probability density.

% \vspace{-1em}
\subsection{Base measures and densities over standard Borel spaces}
\label{subsec:sbs_densities}
Applied probability typically works not with probability measures but with probability densities, functions over a finite-dimensional sample space giving the ``derivative'' of a probability measure at a point. However, probability densities only exist for measures that meet the conditions of the Radon-Nikodym Theorem, and only relative to a specified base measure over the sample space. This section will restrict the residual objects or internal noises of joint Markov kernels to standard Borel sample spaces admitting probability densities, and then show that this restriction still admits a broad class of joint Markov kernels.

% Probability densities only exist relative to a base measure over a finite-dimensional sample space, and most probability applications additionally consider \emph{standard} probability spaces: a standard Borel space (Definition~\ref{def:standard_borel_space}) equipped with its Lebesgue measure~\cite{Ito1984}. The category $\Sbs$ of standard Borel spaces with measurable maps, however, considers all its objects isomorphic at each cardinality: they consist of finite sets, countable sets (such as $\mathbb{N}$ and $\mathbb{Z}$), and $\mathbb{R}$ with the appropriate Borel $\sigma$-algebras. ``Categorifying'' the distinction between different dimensionalities of Lebesgue measure requires making it part of each object, respected by morphisms.
Definition~\ref{def:measure_space_cat} provides a suitable ambient category for base measures.
\begin{definition}[Category of measure spaces]
\label{def:measure_space_cat}
The \emph{category of measure spaces} $\mathbb{M}$ has as objects the measure spaces $(X, \Sigma_X, \mu)$ (Definition~\ref{def:measure_space}) and as morphisms the measure-preserving maps
\begin{align*}
    \mathbb{M}\left((Z, \Sigma_Z, \mu_Z), (X, \Sigma_X, \mu_X)\right) &= \left\{ f: \Meas((Z, \Sigma_Z), (X, \Sigma_X)) \mid \forall \sigma_X \in \Sigma_X, \mu_Z(f^{-1}(\sigma_X)) = \mu_X(\sigma_X) \right\}.
\end{align*}
\end{definition}
Applications typically deal with probability densities over finite-dimensional Euclidean spaces and countable sets. In $\Meas$, these can be characterized by the standard Borel spaces $\Sbs \subset \Meas$, which are unique for each cardinality up to uncountability. Assigning these their canonical base measures will provide a suitable setting of measure spaces for characterizing densities.

However, the Radon-Nikodym Theorem requires that the sample space admit not only a measure but a $\sigma$-finite (Definition~\ref{def:sigma_finite}) base measure. Proposition~\ref{prop:measure_space_product} and Proposition~\ref{prop:measure_space_sum} will therefore characterize the algebraic operations under which $\sigma$-finite measure spaces are closed. Proposition~\ref{prop:measure_space_product} below will characterize the base measures for joint probability densities.
\begin{proposition}[$\sigma$-finite measure spaces have finite direct products]
\label{prop:measure_space_product}
Let $I \in Ob(\mathbf{FinSet})$ be a set and let there be an $I$-indexed family of $\sigma$-finite measure spaces $(X_i, \Sigma_{X_i}, \mu_{X_i})_{i \in I} \in Ob(\mathbb{M})$. Then there exists a $\sigma$-finite \emph{direct product} measure space $\bigotimes_{i \in I} (X_i, \Sigma_{X_i}, \mu_{X_i}) = (X, \Sigma_X, \mu_X)$.
% \begin{align*}
%     \bigotimes_{i \in I} (X_i, \Sigma_{X_i}, \mu_{X_i}) &= (X, \Sigma_X, \mu_X).
% \end{align*}
% Given two $\sigma$-finite measure spaces $(X, \Sigma_X, \mu_X), (Y, \Sigma_Y, \mu_Y) \in Ob(\mathbb{M})$, they have a $\sigma$-finite \emph{direct product} $(X \times Y, \Sigma_X \times \Sigma_Y, \mu_X \times \mu_Y) \in Ob(\mathbb{M})$.
\end{proposition}
\begin{proof}
The product $\bigotimes_{i \in I} (X_i, \Sigma_i) \in Ob(\Meas)$ exists thanks to $\Meas$ being Cartesian, so that the resulting set is that of Cartesian products and the $\sigma$-algebra is also that of Cartesian products. Letting $\pi_i$ be the projection indexed by $i \in I$ of a Cartesian product, we write the $\sigma$-finite product measure (which exists and is unique when $(X_i, \Sigma_{X_i}, \mu_{X_i})$ are $\sigma$-finite~\cite{Tao2011}\footnote{Definition 1.7.4, page 161}) as $\mu_X(\sigma_{X}) = \prod_{i \in I} \mu_{X_i}\left(\left\{ \pi_i(x): x \in \sigma_{X} \right\}\right)$, yielding the direct product $\left( \bigotimes_{i \in I} X_i, \bigotimes_{i \in I} \Sigma_{X_i}, \mu_X \right) \in Ob(\mathbb{M})$.
% \begin{align*}
%     % \mu_X(\sigma_{X}) &= \prod_{i \in I} \mu_{X_i}\left(\left\{ \pi_i(x): x \in \sigma_{X} \right\}\right) \\
%     \left( \bigotimes_{i \in I} X_i, \bigotimes_{i \in I} \Sigma_{X_i}, \mu_X \right) &\in Ob(\mathbb{M}). \qedhere
% \end{align*}
\end{proof}
The reader can check that the direct product of measure spaces does not form a categorical product: the pairing required to witness the universal property will not be measure-preserving, with intervals of different lengths in the real line providing a counterexample.

Proposition~\ref{prop:measure_space_sum} will then characterize the base measures for mixture probability densities.
\begin{proposition}[$\sigma$-finite measure spaces have countable direct sums~\cite{Fremlin2010b}\footnote{214L, page 38}]
\label{prop:measure_space_sum}
Let $I \in Ob(\mathbf{Set})$ be a countable set and $(X_i, \Sigma_{X_i}, \mu_{X_i})_{i \in I} \in Ob(\mathbb{M})$ be a family of $\sigma$-finite measure spaces indexed by $I$. Then there exists a $\sigma$-finite \emph{direct sum} measure space $\bigoplus_{i \in I} (X_i, \Sigma_{X_i}, \mu_{X_i}) \in Ob(\mathbb{M})$.
\end{proposition}
\begin{proof}
The direct sum $\bigoplus_{i \in I} (X_i, \Sigma_{X_i}, \mu_{X_i}) = \left(X, \Sigma_X, \mu_X \right) \in Ob(\mathbb{M})$ of the indexed family consists of the set $X = \bigcup_{i \in I} \left(X_i \times \{i\}\right)$, the $\sigma$-algebra $\Sigma_X = \left\{ \sigma_X : \sigma_X \subseteq X, \forall i\in I, \{x: (x, i) \in \sigma_X\} \in \Sigma_{X_i} \right\}$, and the sum measure $\mu_X(\sigma_X) = \sum_{i \in I} \mu_{X_i}(\{x: (x, i) \in \sigma_X\})$.
% \begin{align*}
%     % X &= \bigcup_{i \in I} \left(X_i \times \{i\}\right) &
%     % \Sigma_X &= \left\{ \sigma_X : \sigma_X \subseteq X, \forall i\in I, \{x: (x, i) \in \sigma_X\} \in \Sigma_{X_i} \right\} \\
%     % \mu_X(\sigma_X) &= \sum_{i \in I} \mu_{X_i}(\{x: (x, i) \in \sigma_X\}) &
%     \bigoplus_{i \in I} (X_i, \Sigma_{X_i}, \mu_{X_i}) &= \left(X, \Sigma_X, \mu_X \right). \qedhere
% \end{align*}
% We have restricted to countable index sets $I$ to maintain the $\sigma$-finiteness of the resulting measure.
\end{proof}
The reader can check that the direct sum of measure spaces does not form a categorical coproduct: the copairing required to witness the universal property will not be measure-preserving.

The above propositions characterized the algebra of $\sigma$-finite measure spaces, which thus now requires base cases. Restricting our attention to the standard Borel spaces, we can take the singleton set $(I, \mathcal{B}(I), \mu_{\#})$ equipped with the counting measure $\mu_{\#}$ and the real line $(\mathbb{R}, \mathcal{B}(\mathbb{R}), \lambda)$ with the Lebesgue measure as those base cases. An $n$-fold or countable direct sum of the singleton set gives finite and countable discrete measure spaces, whose counting measure is $\sigma$-finite, while an $n$-fold product of the real line gives the Euclidean spaces, whose $n$-dimensional Lebesgue measures are $\sigma$-finite. Definition~\ref{def:borel_measure_space} will therefore formally give the class of measure spaces suitable for forming probability densities.
% The below considers $\mathbb{R}^{\mathbb{N}}$ as the Hilbert cube and its Lebesgue measure $\lambda^{\infty}$ as the countable product measure of individual Lebesgue measures $\lambda^{1}$, .
\begin{definition}[$\sigma$-finite standard Borel measure space]
\label{def:borel_measure_space}
The subcategory $\mathbb{M}_{\mathcal{B}} \subset \mathbb{M}$ restricts the category of measure spaces to the $\sigma$-finite \emph{standard Borel} measure spaces freely generated by finite direct products $\otimes$ (Proposition~\ref{prop:measure_space_product}) and countable direct sums $\oplus$ (Proposition~\ref{prop:measure_space_sum}) of the counting-measured singleton space $(\mathbbm{1}, \mathcal{B}(\mathbbm{1}), \mu_{\#})$ and the Lebesgue-measured reals $(\mathbb{R}, \mathcal{B}(\mathbb{R}), \lambda)$.
% A \emph{standard Borel} measure space is a countable direct product $\otimes$ (Proposition~\ref{prop:measure_space_product}) or direct sum $\oplus$ (Proposition~\ref{prop:measure_space_sum}) of the counting-measured singleton space $(\mathbbm{1}, \mathcal{B}(\mathbbm{1}), \mu_{\#})$ or the Lebesgue-measured reals $(\mathbb{R}, \mathcal{B}(\mathbb{R}), \lambda)$
% \begin{align*}
%     \mathbf{M} &::= (\mathbbm{1}, \mathcal{B}(\mathbbm{1}), \mu_{\#}) \mid (\mathbb{R}, \mathcal{B}(\mathbb{R}), \lambda) \mid \bigotimes_{i \in I : |I| \leq |\mathbb{N}|} \mathbf{M}_i \mid \bigoplus_{i \in I : |I| \leq |\mathbb{N}|} \mathbf{M}_i.
% \end{align*}
% where the product $\otimes$ is the product in $\Meas$ with a product measure and the coproduct $\oplus$ is the coproduct in $\Meas$ with a disjoint union of the two underlying measures pushed over the injections.
\end{definition}
Definition~\ref{def:borel_measure_space} covers the most common sample spaces and their base measures, as instances of a more general construction assigning base measures to finite-dimensional manifolds as sample spaces for probability densities~\cite{Radul2021}. The above only allows finite products, since the product-of-Lebesgues measure on the Hilbert cube $\mathbb{R}^{\mathbb{N}}$ (via the Borel isomorphism $\mathbb{R} \simeq [0, 1]$) fails to be $\sigma$-finite~\cite{Baker1991}. The rest of the paper will therefore work with measure spaces $\mathbb{M}_{\mathcal{B}}$, whose isomorphisms preserve base measures.

Having a class of measure spaces suitable for stating probability densities with respect to count, length, area, volume, etc., Definition~\ref{def:density_kernel} gives the class of Markov kernels which will admit densities.
\begin{definition}[Density kernel]
\label{def:density_kernel}
A standard Borel \emph{density kernel} is a $\sigma$-finite (Definition~\ref{def:sigma_finite}) Markov kernel $f: Z \leadsto X$ whose codomain forms a $\sigma$-finite standard Borel measure space $(X, \Sigma_X, \mu_{X}) \in Ob(\mathbb{M}_{\mathcal{B}})$ and which is absolutely continuous $\forall z, f(z) \ll \mu_X$ with respect to the base measure $\mu_X$
\begin{align*}
    \Dens((Z, \Sigma_Z), (X, \Sigma_X)) &:= \left\{ (f, \mu_X): (Z \leadsto X) \times \Measure{X} \mid (X, \Sigma_X, \mu_{X}) \in Ob(\mathbb{M}_{\mathcal{B}}), \forall z\in Z, f(z) \ll \mu_x \right\}.
\end{align*}
\end{definition}
Probability (and arbitrary measure) densities $\prior{}{x}{z}$ also admit an alternative interpretation as measure kernels $Z \times X \times \Sigma_{I} \rightarrow \extWeight$ whose integration under the base measure yields the normalizing constant. Proposition~\ref{thm:density_kernel} verifies that density kernels in fact admit probability densities.
% Having a suitable base measure, Definition~\ref{def:density_kernel} gives the class of Markov kernels which admit densities with respect to that measure, and Proposition~\ref{thm:density_kernel} verifies that they do so.
% \begin{definition}[Density kernel]
% \label{def:density_kernel}
% A \emph{density kernel} is a $\sigma$-finite Markov kernel $f: Z \leadsto X$ into a standard Borel space $(X, \Sigma_X) \in Ob(\Sbs)$ that is everywhere absolutely continuous ($f(z) \ll \mu_X$) with respect to $\mu_{X}$, so that $\forall z\in Z, \forall \sigma_X \in \Sigma_X, \mu_{X}(\sigma_X) = 0 \implies f(z)(\sigma_X) = 0$.
% % \begin{align*}
% %     \forall z\in Z, f(z) \ll \mu_{X} &\equiv \forall \sigma_X \in \Sigma_X, \mu_{X}(\sigma_X) = 0 \implies f(z)(\sigma_X) = 0.
% % \end{align*}
% \end{definition}
\begin{theorem}[Density kernels admit densities]
\label{thm:density_kernel}
Every density kernel $(f, \mu_X): (Z \leadsto X) \times \Measure{X}$ (Definition~\ref{def:density_kernel}) into a standard Borel measure space admits a density with respect to the base measure $\mu_{X}$.
\end{theorem}
\begin{proof}
$\sigma$-finiteness of the kernel $f$ and the base measure $\mu_{X}$, plus absolute continuity, give the necessary conditions for the classical Radon-Nikodym theorem: a Radon-Nikodym derivative therefore exists
\begin{align*}
    \frac{df(z)}{d\mu_X} &: \Meas(X, \weight) &
    f(z)(\sigma_X) &= \expectint{\mu_X(dx)}{x \in \sigma_X}{\frac{df(z)}{d\mu_X}\left(x\right)}.
\end{align*}
The Radon-Nikodym derivative is the measure-theoretic notion of a probability density function
\begin{align*}
    \frac{df}{d\mu_X} &: \Meas(Z \times X, \weight), &
    \prior{f}{\cdot}{\cdot} &: \Meas(X \times Z, \weight), &
    \prior{f}{x}{z} &:= \frac{df(z)}{d\mu_X}\left(x\right).
\end{align*}
The conditions on density kernels are therefore sufficient to yield probability densities.
\end{proof}
Despite the hom-set notation used for convenience, density kernels do not form a category: identity Markov kernels are Dirac delta measures that only admit densities in discrete spaces. They do, however, support all compositional structure under which the resulting base measure still indexes a standard Borel measure space. Definition~\ref{def:density_precomp} lays the foundation for this structure.
\begin{definition}[Precomposition of a density kernel]
\label{def:density_precomp}
Given a density kernel $(f, \mu_X): (Z \leadsto X) \times \Measure{X}$ and a Markov kernel $h: W \leadsto Z$, their \emph{precomposition} is $(f, \mu_X) \circ_{\Dens} h = (f \circ_{\Stoch} h, \mu_X)$.
\end{definition}
The above precomposition gives a definition for the composition of two density kernels: given $(f, \mu_X)$ and $(g, \mu_Y)$ their composite will just be $(g \circ f, \mu_Y)$. The existence of precomposition supports a product and coproduct algebra of density kernels, as expected based on the probability algebra itself.
\begin{theorem}[Density kernels admit products and coproducts]
\label{thm:density_kernels_cartesian}
Density kernels have products $(f, \mu_X) \otimes (g, \mu_Y)$ and coproducts $(f, \mu_X) \oplus (g, \mu_Y)$, witnessed by a pairing and copairing.
\end{theorem}
\begin{proof}
Any two density kernels $(f, \mu_X): \Dens((Z, \Sigma_Z), (X, \Sigma_X))$ and $(g, \mu_Y): \Dens((Z, \Sigma_Z), (Y, \Sigma_Y))$ admit a pairing via precomposition with copying and the product measure space $(X, \Sigma_X, \mu_X) \otimes (Y, \Sigma_Y, \mu_Y) = (X \times Y, \Sigma_X \times \Sigma_Y, \mu_X \otimes \mu_Y) \in Ob(\mathbb{M}_{\mathcal{B}})$
\begin{align*}
    \left( \mcopy{Z} \fatsemi (f \otimes g), \mu_x \otimes \mu_Y \right) &: \Dens((Z, \Sigma_Z), (X, \Sigma_X) \otimes (Y, \Sigma_Y)).
\end{align*}
% $(\mcopy{Z} \fatsemi (f \otimes g), \mu_x \otimes \mu_Y): \Dens((Z, \Sigma_Z), (X, \Sigma_X) \otimes (Y, \Sigma_Y))$ .
% \begin{align*}
%     (X, \Sigma_X, \mu_X) \otimes (Y, \Sigma_Y, \mu_Y) &= (X \times Y, \Sigma_X \times \Sigma_Y, \mu_X \otimes \mu_Y) \in Ob(\mathbb{M}_{\mathcal{B}}) \\
%     (\mcopy{Z} \fatsemi (f \otimes g), \mu_x \otimes \mu_Y) &: \Dens((Z, \Sigma_Z), (X, \Sigma_X) \otimes (Y, \Sigma_Y)).
% \end{align*}
Any two density kernels $(f, \mu_Y): \Dens((Z, \Sigma_Z), (Y, \Sigma_Y))$ and $(g, \mu_Y): \Dens((X, \Sigma_X), (Y, \Sigma_Y))$ also admit a copairing $\binom{(f, \mu_Y)}{(g, \mu_Y)} = \left(\binom{f}{g}, \mu_Y \right)$ via the copairing of their Markov kernels in $\Stoch$.
\end{proof}
The above theorems demonstrate that density kernels represent probability densities compositionally. However, density kernels do not admit post-composition with arbitrary Markov kernels.  Section~\ref{subsec:joint_densities} will remedy this issue by applying density kernels to generate the residuals in joint Markov kernels.

% \vspace{-1em}
\subsection{Joint densities over joint distributions}
\label{subsec:joint_densities}
Density kernels are not closed under pushforwards, and they do not form a category. $\Joint$ cannot apply directly to them. Definition~\ref{def:joint_density_kernel} therefore gives an appropriate definition for joint density kernels.
\begin{definition}[Joint density kernel]
\label{def:joint_density_kernel}
A \emph{joint density kernel} between objects $Z, X \in Ob(\Stoch)$ is a pair of a density kernel into $(M, \Sigma_M, \mu_M) \in Ob(\mathbb{M}_{\mathcal{B}})$ with a deterministic map parameterized by the residual
\begin{align*}
    \dJoint(Z, X) &:= \left\{ ((f, \mu_M), [M, k]): \Dens(Z, M) \times \Para{\Stoch_{det}}(Z, X) \mid (M, \Sigma_M, \mu_M) \in Ob(\mathbb{M}_{\mathcal{B}}) \right\}.
\end{align*}
\end{definition}
Hom-set notation once again implies these kernels form a category, which in fact they will. First, Corollary~\ref{cor:density_cond_prod} shows density kernels are closed under the joint distribution construction of Equation~\ref{eq:joint_comp}.
\begin{corollary}[Density kernels admit joint distributions as conditional products]
\label{cor:density_cond_prod}
Given a density kernel $(f_1, \mu_{M_1}): \Dens(Z, M_1)$, a measurable map $k_1: \Stoch_{det}(Z \otimes M_1, X)$, and a density kernel $(f_2, \mu_{M_2}): \Dens(X, M_2)$, composing them according to the diagram in Equation~\ref{eq:joint_comp} forms a joint density kernel
\begin{align*}
    \left( \mcopy{Z} \fatsemi ((\mcopy{M_1} \circ f_1) \otimes id_Z) \fatsemi (id_{M_1} \otimes (f_2 \circ k_1)), \mu_{M_1} \otimes \mu_{M_2}\right) &: \Dens(Z, M_1 \otimes M_2).
    % \left( \usesatexfig{jointCompStoch}, \mu_{M_1} \otimes \mu_{M_2}\right) &: \Dens(Z, M_1 \otimes M_2).
\end{align*}
\end{corollary}
Theorem~\ref{thm:joint_density_subcat} will show that joint density kernels form a category, and characterize them as joint Markov kernels with the extra data of a base measure on the residual.

\begin{theorem}[Joint density kernels form a category]
\label{thm:joint_density_subcat}
Joint density kernels $\dJoint$ form a wide subcategory of the restriction $\Joint_{\BorelStoch}$ of $\Joint$ to standard Borel Markov kernels in $\BorelStoch$.
\end{theorem}
\begin{proof}
% \vspace{-0.5em}
First we show the joint density kernels form a subcategory, then show that subcategory is wide.

Corollary~\ref{cor:density_cond_prod} shows that density kernels are closed under the composition of $\Joint$ (Equation~\ref{eq:joint_comp}), and so along with the obvious identity morphisms and associativity law they form a category. Theorem~\ref{thm:density_kernels_cartesian} shows that this category inherits the product and coproduct structure of $\Joint$. The structure morphisms in $\Joint$ all have the unit $I$ for their residual, which admits a trivial density as a finite standard Borel space; $\dJoint$ therefore inherits the copy/delete structure of $\Joint$. This implies $\dJoint \subset \Joint_{\BorelStoch}$.

Objects and structure morphisms are inherited from $\Joint$, so the subcategory is wide.
\end{proof}
The theorem above gives a copy/delete categorical structure for joint density kernels, whose base and probability measures will be $\sigma$-finite (Definition~\ref{def:sigma_finite}) as conditions for Radon-Nikodym. There is then a precise class of measures formed by pushing forward a $\sigma$-finite measure~\cite{VakarOng2018}: the $s$-finite measures (Definition~\ref{def:s_finite}). Proposition~\ref{prop:sfkrn_cdcat} shows that such $s$-finite measure kernels form a copy/delete category.
\begin{proposition}[$s$-finite measure kernels form a CD-category~\cite{ChoJacobs2019}\footnote{Example 7.2}]
\label{prop:sfkrn_cdcat}
$s$-finite measure kernels (Definition~\ref{def:s_finite}) between measurable spaces form a CD-category $\sfKrn$ with $Ob(\sfKrn) = Ob(\Meas)$ and hom-sets given by $\sfKrn((Z, \Sigma_Z), (X, \Sigma_X)) = \left\{ f: Z \times \Sigma_X \rightarrow \extWeight \mid \forall z, f(z)\: \text{is $s$-finite} \right\}$.
\end{proposition}
$\sfKrn$ only forms a copy/delete category, not a Markov category, since different measure kernels may have different normalizing constants, including an infinite normalizing constant. Corollary~\ref{cor:sfstoch_markov} shows that restricting to probability kernels forms a Markov category.
\begin{corollary}[$s$-finite probability kernels form a Markov category]
\label{cor:sfstoch_markov}
The $s$-finite \emph{probability} kernels $f: \sfKrn((Z, \Sigma_Z), (X, \Sigma_X))$, for which $\forall z\in Z, f(z, X) = 1$, form a Markov category $\mathbf{sfStoch} \subset \mathbf{Stoch}$.
\end{corollary}
\begin{proof}
The restriction of all kernels to normalize to measure 1 renders every map $\mdel{Z}$ unique, making $I$ a terminal object and the resulting subcategory $\sfStoch$ a Markov category.
\end{proof}

Having a categorical setting capturing the Markov kernels used in computable applications, the remainder of this paper will interpret morphisms in $\dJoint$ into $s$-finite Markov kernels $\sfStoch(Z, X)$ with densities $\sfKrn(Z \otimes X, I)$. Theorem~\ref{thm:joint_density_sfkrn} shows that the joint Markov kernels of $\dJoint$ are $s$-finite and admit densities jointly measurable in the parameter and the residual.
\begin{theorem}[Joint density kernels give $s$-finite probability kernels and densities]
\label{thm:joint_density_sfkrn}
Joint density kernels $(f, [M, k]): \dJoint((Z, \Sigma_Z), (X, \Sigma_X))$ admit probability kernels $p: \sfStoch((Z, \Sigma_Z), (X, \Sigma_X))$ marginalizing out their randomness and probability densities $\prior{f}{\cdot}{\cdot}: \sfKrn((Z, \Sigma_Z) \otimes (M, \Sigma_M), I)$.
\end{theorem}
\begin{proof}
% \vspace{-0.75em}
Any density kernel $f: \Dens((Z, \Sigma_Z), (M, \Sigma_M))$ gives a $\sigma$-finite probability measure and any $(M, k): \Para{\Stoch_{det}}(Z, X)$ pushes it forward. Every pushforward of a $\sigma$-finite Markov kernel is $s$-finite (Proposition~\ref{prop:sfkrn_pushforward}), so $\dJoint$ consists entirely of $s$-finite joint Markov kernels. Being $s$-finite, joint density kernels admit the required probability kernels $p: \sfStoch((Z, \Sigma_Z), (X, \Sigma_X))$ with $p(z, \sigma_X) = f(z, k(z)^{-1}(\sigma_X))$ and densities $\prior{f}{\cdot}{z}: M \times \Sigma_I \rightarrow \extWeight$ measurable in $z$ and $m$. Proposition~\ref{thm:density_kernel} defines these as the Radon-Nikodym derivative $\prior{f}{m}{z}(\{*\}) = \frac{df(z)}{d\mu_{M}}(m)$.
\end{proof}
Theorem~\ref{thm:joint_density_subcat} and Theorem~\ref{thm:joint_density_sfkrn} finally gives a desirable categorical setting: one which supports composition, products, and coproducts as a copy/delete category should, while decomposing into a deterministic causal mechanism applied to a random variable with a joint density as a structural causal model should. Section~\ref{sec:diagram_scm} will put together the machinery in this section with existing work on factorizing string diagrams syntactically to interpret those factorizations as generalizing directed graphical models. % For statistics applications needing them, Appendix~\ref{subsec:weights} extends this setting with unnormalized densities.

% \vspace{-1em}
\section{Diagrams as causal factorizations of joint distributions and densities}
\label{sec:diagram_scm}
% \vspace{-0.5em}

This section demonstrates that string diagrams with factorized densities support the full ``ladder of causation''~\cite{Pearl2018} as probabilistic models: factorized distributions, interventions, and counterfactual queries. Section~\ref{sec:gadgets} presented the $\dJoint$ construction for building up joint densities while still expressing arbitrary pushforward measures over them. Reasoning about directed graphical models or probabilistic programs compositionally requires providing a graphical syntax interpretable into $\dJoint$. Recent work~\cite{FritzKlingler2023,FritzLiang2023} treated a combinatorial syntax of string diagrams as generalized causal models. This section first reviews the definitions of a generalized causal model and its factorization of a Markov kernel, then applies that syntax to this paper's novel constructions. Doing so will enable show that via generalized causal models, joint density kernels admit factorization of their densities (Theorem~\ref{thm:factorization_joint_densities}), interventional distributions (Theorem~\ref{thm:interventional_dists}), and counterfactual distributions (Theorem~\ref{thm:counterfactuals}).

% ; this section applies that syntax to our constructions. Section~\ref{subsec:causal_models} reviews the definitions involved in constructing a causal model as a morphism in a freely generated Markov category and interpreting it into a target Markov or copy/delete category. Section~\ref{subsec:joint_scm} then interprets generalized causal models into joint density kernels and shows how to represent interventions and counterfactuals.

% \vspace{-1em}
% \subsection{Factorizations of joint density kernels by generalized causal models}
% \label{subsec:causal_models}
\emph{Generalized} causal models~\cite{FritzKlingler2023} provide several advantages over causal Bayesian networks as a representation of causal structure in probability models. They allow for global inputs to and outputs from a causal model, making explicit the interface necessary to reason compositionally about causal structures. It also makes explicit the grouping of ``nodes'' (in the underlying graph or hypergraph) into Markov kernels, clarifying how the joint distribution decomposes into random variables and causal mechanisms.
% Appendix~\ref{sec:free_constructions} reviews the free copy/delete and Markov category constructions employed by a generalized causal model over a finite hypergraph $\Sigma \in \FinHyp$.

Definition~\ref{def:generalized_causal_model} will now describe a generalized causal model.
\begin{definition}[Generalized causal model~\cite{FritzLiang2023}]
\label{def:generalized_causal_model}
A \emph{generalized causal model} $\varphi$ over $\Sigma \in \FinHyp$\footnote{see Appendix~\ref{sec:free_constructions}} is a string diagram $p \rightarrow \dom{\tau} \leftarrow q: \FreeMarkov{\Sigma}(n, m)$ for $n, m \in \mathbb{N}$ with a bijection $q$ on wires.
\end{definition}
Any generalized causal model $p \rightarrow \dom{\tau} \leftarrow q$ is equivalent to a morphism~\cite{FritzKlingler2023}
% \vspace{-1em}
\begin{align*}
    \varphi &: \FreeMarkov{\Sigma}\left(\bigotimes_{i=1}^{n} \tau(p(i)), \bigotimes_{j=1}^{m} \tau(q(j))\right).
\end{align*}
Definition~\ref{def:factorization} will capture factorization of a Markov kernel by a generalized causal model; Fritz and Klinger~\cite{FritzKlingler2023} called it causal compatibility in their Definition 11.
\begin{definition}[Factorization of a Markov kernel by a causal model~\cite{FritzKlingler2023}]
\label{def:factorization}
A \emph{factorization} $(f, \varphi, F)$ in $\Stoch$ consists of a morphism with decomposed domain and codomain $f: \Stoch\left(\bigotimes_{i=1}^{n} D_i, \bigotimes_{j=1}^{m} C_j\right)$,
% \begin{align*}
%     f &: \M\left(\bigotimes_{i=1}^{n} D_i, \bigotimes_{j=1}^{m} C_j\right),
% \end{align*}
a causal model $\varphi: \FreeMarkov{\Sigma}(n, m)$, and a strict Markov functor $F: \FreeMarkov{\Sigma} \rightarrow \Stoch$  such that $f = F(\varphi)$, $\forall i\in[1..n], D_i = F(\dom{\varphi}_i)$, and $\forall j\in[1..m], C_j = F(\cod{\varphi}_j)$.
\end{definition}
% Finally, Definition~\ref{def:pure_bloom} gives the class of causal models isomorphic to directed graphical models.
% \begin{definition}[Pure bloom causal model~\cite{FritzKlingler2023}]
% \label{def:pure_bloom}
% Any generalized causal model $p \rightarrow \dom{\tau} \leftarrow q: \FreeMarkov{\Sigma}(n, m)$ is \emph{pure bloom} when every output of every box becomes just one global output wire $\underline{m} \leftrightarrow_{q} W(\dom{\tau})$.
% \end{definition}

% Having a way to connect the combinatorial syntax of generalized causal models with actual Markov categories, Section~\ref{subsec:joint_scm} will apply factorization to the constructions of Section~\ref{sec:gadgets}.

% \vspace{-1em}
% \subsection{Factorizations of joint density kernels by generalized causal models}
% \label{subsec:joint_scm}
The joint density kernels $\dJoint(Z, X)$ have an important difference from the simple Markov kernels factorized by generalized causal models in Definition~\ref{def:factorization}: the density to factorize is not over $x \in X$ but over the extra structure of the residual $m \in M$. This subsection will show how to add this extra structure to a factorization, then show how to access that structure to show that generalized causal models over joint density kernels support causal inference as such: interventions and counterfactual reasoning.

Definition~\ref{def:joint_factorization} will require a factorization to label each box's residual to apply to joint Markov kernels.
\begin{definition}[Joint factorization functor]
\label{def:joint_factorization}
A \emph{joint} factorization functor for a signature $\Sigma \in \FinHyp$ is a labeling of boxes with residual wires $r: B(\Sigma) \rightarrow W(\Sigma)^{*}$ and a strict Markov functor $F: \FreeMarkov{\Sigma} \rightarrow \Joint$ respecting $\forall b \in B(\Sigma), F(b) = ([\bigotimes_{w \in r(b)} F(w), k], f): \Joint(F(\dom{b}), F(\cod{b}))$.
\end{definition}
Joint factorizations label residuals in the signature and also map to joint density kernels. Theorem~\ref{thm:factorization_joint_densities} shows they factorize the implied joint density of a causal model.
\begin{theorem}[Joint density kernels admit factorized densities]
\label{thm:factorization_joint_densities}
Given a signature $\Sigma \in \FinHyp$, a strict Markov functor $F: \FreeMarkov{\Sigma} \rightarrow \dJoint$ gives a joint density $\prior{f}{\cdot}{\cdot \in F(\dom{\varphi})}$ for every causal model $\varphi: \FreeMarkov{\Sigma}(n, m)$.
\end{theorem}
\begin{proof}
% \vspace{-1em}
Definition~\ref{def:joint_factorization} requires for any sub-diagram $\varphi' \subseteq \varphi$ there will be some $F(\varphi') = (f, [M, k])$. Theorem~\ref{thm:joint_density_sfkrn} then gives a density over the residual, while the functoriality of $F$ and Corollary~\ref{cor:density_cond_prod} together imply that products of individual joint-densities yield the complete joint density.
\end{proof}
% \vspace{-1em}
Theorem~\ref{thm:interventional_dists} then shows that by assigning boxes optional points in their codomains, joint factorizations also admit interventional distributions.
\begin{theorem}[Joint factorizations admit interventional distributions]
\label{thm:interventional_dists}
Consider a joint factorization $(f, \varphi, F)$ over a signature $\Sigma$. Then any intervention $\mathbf{do}: \prod_{b: B(\Sigma)} I \oplus \C_{det}(I, F(\cod{b})$ induces a functor $\mathrm{Int}: \FreeMarkov{\Sigma} \rightarrow \Joint$ and an interventional distribution $\mathrm{Int}(\varphi)$.
% Then every function $\mathbf{do}: B(\Sigma) \rightarrow [0, 1] \oplus I$ induces a functor $\mathrm{Int}(\mathbf{do}): \FreeMarkov{\Sigma} \rightarrow \Joint$ representing an interventional distribution.
\end{theorem}
\begin{proof}% \vspace{-1em}
Any single-box free string diagram has an image $F(\left<b\right>)$. We define the required functor $\mathrm{Int}: \FreeMarkov{\Sigma} \rightarrow \Joint$ by extension of a hypergraph morphism $\alpha: \Sigma \rightarrow \hyp{\Joint}$ following Fritz and Liang~\cite{FritzLiang2023} (see their Remark 4.3). $\alpha$ will be identity on wires and intervene on boxes
\begin{align*}
    % \vspace{-1em}
    \alpha(b) &: B(\Sigma) \rightarrow B(\hyp{\Joint}) \\
    \alpha(b) &= \begin{cases}
        \hyp{([I, \mdel{\dom{b}}], \mdel{\dom{b}} \fatsemi x)} & \mathbf{do}(b) = \mathbf{inr}(x) \\
        \hyp{F(\left<b\right>)} & \mathbf{do}(b) = \mathbf{inl}(I)
    \end{cases}. \qedhere
\end{align*}
\end{proof}

Finally, Theorem~\ref{thm:counterfactuals} employs similar reasoning to model counterfactual queries over jointly factorized causal models, given fixed values for random variables and an intervention.
\begin{theorem}[Joint factorizations give counterfactuals]
\label{thm:counterfactuals}
Consider a signature $\Sigma \in \FinHyp$ and a joint factorization $(f, \varphi, F)$. Then any intervention $\mathbf{do}: \prod_{b: B(\Sigma)} I \oplus \C_{det}(I, F(\cod{b})$ and any assignment $U: B(\Sigma) \rightarrow [0, 1]$ of uniform random variates to boxes induces a functor $\mathrm{If}: \FreeMarkov{\Sigma} \rightarrow \Joint$ and a counterfactual distribution $\mathrm{If}(\varphi)$.
\end{theorem}
\begin{proof}% \vspace{-1em}
We work as above, but this time explicitly consider the structure of the image $F(\left<b\right>)=(f, [M, k])$. $f$ gives a standard Borel probability measure, so the Randomization Lemma~\cite{Bogachev2007} demonstrates equality of $f$ with a pushforward $F(\left<b\right>)_1 = f(\cdot) = g(\cdot, z)_{*}(U)(du)$ of the uniform distribution $U(du)$ by a deterministic map $g(\cdot, z)$. Our hypergraph morphism utilizes that fact
\begin{align*}
    % \vspace{-1em}
    \alpha(b) &= \begin{cases}
        \hyp{([I, \mdel{\dom{b}}], \mdel{\dom{b}} \fatsemi x)} & \mathbf{do}(b) = \mathbf{inr}(x) \\
        \hyp{ \dirac{U(b)}{g(b, \cdot)} } & \mathbf{do}(b) = \mathbf{inl}(I)
    \end{cases}. \qedhere
\end{align*}
\end{proof}
% \todo{Theorem: factorizations of joint densities are structural causal models}

Together, Theorems \ref{thm:factorization_joint_densities}, \ref{thm:interventional_dists}, and \ref{thm:counterfactuals} demonstrate that joint density kernels, jointly factorized by a generalized causal model, support the properties that have made directed graphical models so widely useful. With these theorems as ``sanity checks'', Section~\ref{sec:discussion} will summarize the paper's overall contributions, give some worked examples applying $\dJoint$, and discuss future work.

% \vspace{-1em}
\section{Discussion}
\label{sec:discussion}
% \vspace{-0.5em}

This paper started from the existing work on copy/delete categories, Markov categories, and the factorization of morphisms in those categories by generalized causal models. From there, Section~\ref{sec:gadgets} constructed a novel Markov category $\Joint$ whose morphisms keep internal track of the joint distribution they denote, defined a subcategory $\dJoint \subset \Joint$ whose morphisms support only joint densities over standard Borel spaces as their internal distributions. Section~\ref{sec:diagram_scm} then demonstrated that $\Joint$ supports factorization by generalized causal models, that these factorize joint densities $\dJoint$, and that these support the interventional and counterfactual reasoning necessary for causal inference. This section will discuss some short worked examples of using $\dJoint$ for real probability models (Section~\ref{subsec:examples}), and then move on to speculate what future work could spring from the paper's developments (Section~\ref{subsec:conclusion}).
% Section~\ref{sec:examples} gave examples how $\dJoint$ enables a practitioner to express joint densities and suggested applications of causal factorizations.

% \vspace{-1em}
\subsection{Worked examples}
\label{subsec:examples}
% \vspace{-0.5em}

\begin{figure}
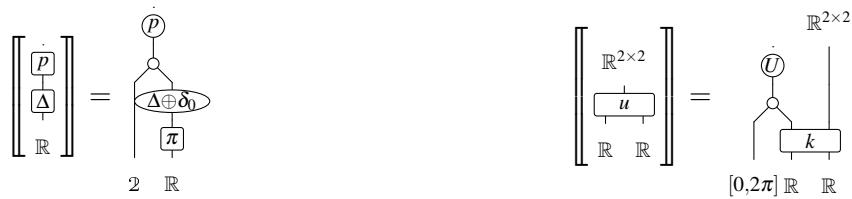

    \centering
    \begin{subfigure}[t!]{0.45\textwidth}
        \begin{align*}
            \sem{\usesatexfig{fakeCoinGraph}} &= \usesatexfig{fakeCoinDiagram}
        \end{align*}
        \caption{The wiring diagram in $\dJoint$ of a mixture model between a delta and a Gaussian, and its image in $\Stoch$ with a coproduct projection}
        \label{subfig:fake_coin}
    \end{subfigure}
    \qquad
    \begin{subfigure}[t!]{0.45\textwidth}
        \begin{align*}
            \sem{\usesatexfig{uniformCircleGraph}} &= \usesatexfig{uniformCircleDiagram}
        \end{align*}
        \caption{A Markov kernel in $\dJoint$ projecting a sample from the uniform circle through a linear transformation, and its image in $\Stoch$}
        \label{subfig:uniform_circle}
    \end{subfigure}
    \caption{Example joint density kernels (Definition~\ref{def:joint_density_kernel}): a mixture model between a constant and a Gaussian distribution depending on a coin flip (\textit{left}) and a Markov kernel projecting a random angle onto a parametrically skewed ellipse (\textit{right}). The $\sem{\cdot}$ functor (Corollary~\ref{cor:joint_markov}) maps into $\Stoch$.}
    \label{fig:example_diagrams}
    % \vspace{-1em}
\end{figure}

The previous sections have focused on formalism. Section~\ref{sec:gadgets} defined a Markov category $\dJoint$ of joint density kernels in $\Stoch$ (rather than the typical restriction to $\mathbf{FinStoch}$) whose residuals (by construction) admit probability densities. Section~\ref{sec:diagram_scm} then established that the generalized causal models recently described in the categorical probability literature can indeed apply to $\dJoint$ morphisms, factorizing their joint densities and providing for causal reasoning. This subsection will apply the $\dJoint$ formalism to the models shown in Figure~\ref{fig:example_diagrams}, taken from Wu et al~\cite{Wu2018} and Radul and Alexeev~\cite{Radul2021}.

Figure~\ref{subfig:fake_coin} shows a generative model in which we detect fake coins by placing an  even number of coins on a well-calibrated balance. The presence of a fake coin, whose weight deviates from the others, will tip the balance away from the neutral position. $p$ determines whether the a fake coin is present, which in turn determines whether the balance position is distributed according to a Gaussian $\Delta \sim \mathcal{N}(1, 0.5)$ or according to a Dirac measure $\Delta \sim \delta_{0}$. The joint distribution shown on the right-hand side of the equation admits a density with respect to the standard Borel measure space $(\mathbbm{2}, \mathcal{B}(\mathbbm{2}), \mu_{\#}) \otimes ((\mathbb{R}, \mathcal{B}(\mathbb{R}), \lambda) \oplus (\mathbbm{1}, \mathcal{B}(\mathbbm{1}), \mu_{\#}))$, whereas the marginal on $\mathbb{R}$ lacks a density for the Lebesgue measure $\lambda$.

Figure~\ref{subfig:uniform_circle} shows the example from Radul and Alexeev~\cite{Radul2021} in which a sample from $U(0, 2\pi)$ is projected onto a non-isotropic ellipse. Those authors calculate a probability density on the ellipse via the projection's Jacobian. Figure~\ref{subfig:uniform_circle} shows the two components of a $\dJoint$ morphism: how the uniformly random angle $U$ and a linear transformation $\mathbb{R}^{2\times 2}$ parameterize the the geometric projection $k$. The equation shows how $\sem{\cdot}$ maps the single box in $\dJoint$ (left) to the Markov kernel in $\Stoch$ (right).

The two examples in Figure~\ref{fig:example_diagrams} both show how the $\dJoint$ construction can compactly encode complex, parameterized joint probability densities linked by deterministic causal mechanisms. Section~\ref{subsec:conclusion} will discuss potential future work extending this paper's construction and conclude.

% \vspace{-1em}
\subsection{Future work and conclusion}
\label{subsec:conclusion}
% \vspace{-0.5em}
This paper's mathematical constructions could generalize or be strengthened in a number of ways. It would be desirable to obtain a category in which Markov kernels admit common-sense densities without having to separate into a density over a standard Borel space and a pushforward through a deterministic map; the Lebesgue decomposition of arbitrary measures into mutually singular absolutely-continuous, diffuse, and atomic portions suggests a possible route to that goal. Up to a normalization constant, every reference measure in $\mathbb{M}$ is a Hausdorff measure. This suggests densities could be obtained by considering manifolds, standardizing on the Hausdorff measure as Radul and Alexeev~\cite{Radul2021} suggest, and then defining density kernels on that foundation. Finally, Definition~\ref{def:borel_measure_space} forms an endofunctor in the category of measure spaces whose algebras and coalgebras may prove of interest. For example, recent work by Dash~\cite{Dash2023} explored defining probability measures on quasi-Borel spaces as pushforwards of a uniform distribution on the Hilbert cube, an element of the endofunctor's terminal coalgebra.
% , particularly in defining probability measures via randomization lemmas (e.g.~as in recent work on the category of quasi-Borel spaces~\cite{Heunen2017,Dash2023}).
% \todo{Maybe mention that \textbf{M} forms a base functor for an F-algebra.}

Future work can go in a number of directions to unify the formalisms of applied probabilistic reasoning. Instantiating this paper's constructions in a Markov category in which all randomness arises from an independent noise source would transform any causal factorization of a joint (density) kernel into a structural causal model~\cite{Pearl2012}, unifying causal Bayes nets with structural equation models. In the application area of probabilistic programming, this paper has only described ``first-order'' probabilistic programming languages lacking general stochastic recursion~\cite{vanDeMeent2018}, corresponding to non-closed Markov categories. A combinatorial syntax for hierarchical string diagrams~\cite{AlvarezPicallo2022} would extend our reasoning in this paper to the closed Markov categories such as $\QBS$~\cite{Heunen2017} that provide denotations for higher-order probabilistic programming languages.  We intend to extend this paper's formalism to categorify Sequential Monte Carlo methods~\cite{Naesseth2019} for generalized causal models of unnormalized distributions. We aim to apply the $\dJoint$ construction alongside recent work on unique name generation~\cite{Sabok2021} to model heterogeneous tracing in probabilistic programming. Recent work on free string diagrams~\cite{Wilson2023} has also suggested ways to map from free string diagrams to free diagrams of optics; equipping joint density kernels with optic structure would follow up on the work of Smithe~\cite{Smithe2020} and Schauer~\cite{Schauer2023}.

\paragraph{Acknowledgements} We would like to thank the anonymous reviewers for their feedback and advice in refining the paper for camera-ready. We would also like to thank the ACT 2023 program chairs for their careful shepherding of the review process. We thank Tobias Fritz, Luke Ong, Sam Staton, and Matthijs Vákár for laying the categorical foundations of $s$-finite Markov kernels. Finally, we would like to extensively thank Alex Lew for early discussions and cooperation on preliminary work to this paper. Eli Sennesh was supported by NSF award 2047253.

\nocite{*}
\bibliographystyle{eptcs}
\bibliography{example}

\appendix

\newpage

\vspace{-1em}
\section{Measure theory background}
\label{sec:measure_theory}
Measure theory studies ways of assigning a ``size'' to a set (beyond its cardinality); these can include count, length, volume, and probability. Definition~\ref{def:standard_borel_space} begins with a nice class of measurable spaces.
\begin{definition}[Standard Borel space]
\label{def:standard_borel_space}
Let  $(X, T_X) \in Ob(\mathbf{Top})$ be a separable complete metric space or homeomorphic to one. Equipping $X$ with its Borel $\sigma$-algebra $\mathcal{B}(X)$ generated by complements, countable unions, and countable intersections of open subsets $U \in T$ yields a \emph{standard Borel space} $(X, \mathcal{B}(X)) \in Ob(\Sbs)$, which is also a measurable space since $\Sbs \subset \Meas$.
\end{definition}
The paper uses standard Borel spaces as a basis for its category of measure spaces (Definition~\ref{def:borel_measure_space}). Example~\ref{ex:unit_interval} is such a space.
\begin{example}[The unit interval]
\label{ex:unit_interval}
The closed unit interval $[0, 1]$ with its Borel $\sigma$-algebra of open sets $\mathcal{B}(0, 1)$ forms a standard Borel space $([0, 1], \mathcal{B}(0, 1))$.
\end{example}
Having a category of measurable spaces and some nice examples, Definition~\ref{def:measure} formally defines what it means to assign a ``size'' to a measurable set.
\begin{definition}[Measure]
\label{def:measure}
A \emph{measure} $\mu: \Measure{Z}$ on a measurable space $(Z, \Sigma_Z) \in Ob(\Meas)$ is a function $\mu: \Sigma_Z \rightarrow \extWeight$ that is null on the empty set ($\mu(\emptyset) = 0$) and countably additive over pairwise disjoint sets
\begin{align*}
    \inference{
        \{\sigma_k \in \Sigma_Z\}_{k\in\mathbb{N}} &
        \forall k\in\mathbb{N}, n\in\mathbb{N}, n \neq k \implies \sigma_k \cap \sigma_n = \emptyset
    }{
        \mu\left( \bigcup_{k \in \mathbb{N}} \sigma_k \right) = \sum_{k\in\mathbb{N}} \mu(\sigma_k)
    }
\end{align*}
\end{definition}
Reasoning compositionally about measure requires a class of maps between a domain and a codomain that form measures. The Giry monad~\cite{Giry1982} sends a measurable space $(X, \Sigma_X)$ to its space of measures $\Measure{X}$ and probability measures $\Prob{X} \subset \Measure{X}$.
% That monad 
Definition~\ref{def:measure_kernel} defines maps into those spaces, treating the domain as a parameter space for a measure over the codomain.
% Measure \emph{kernels} (Definition~\ref{def:measker}) accomplish this by treating the domain as a parameter space for a measure over the codomain.
\begin{definition}[Measure kernel]
\label{def:measure_kernel}
A \emph{measure kernel} between two measurable spaces $(Z, \Sigma_Z), (X, \Sigma_X) \in Ob(\Meas)$ is a function $f: Z \times \Sigma_X \rightarrow \extWeight$ such that $\forall z\in Z, f(z, \cdot) : \Measure{X}$ is a measure and $\forall \sigma_X \in \Sigma_X, f(\cdot, \sigma_X): \Meas((Z, \Sigma_Z), (\extWeight, \Sigma_{\extWeight}))$ is measurable.
\end{definition}
Measure kernels serve both to define Markov kernels below, and to form a broader class of copy/delete categories, which in Theorem~\ref{thm:joint_density_sfkrn} are seen to admit probability densities as morphisms. Definition~\ref{def:markov_kernel} specializes to measure kernels yielding only normalized probability measures.
\begin{definition}[Markov kernel]
\label{def:markov_kernel}
A \emph{Markov kernel} is a measure kernel $f: Z \times \Sigma_X \rightarrow \extWeight$ whose measure is a probability measure so that $\forall z\in Z, f(z, \cdot): \Prob{X}$ and $\forall z\in Z, f(z, X) = 1$.
\end{definition}
The Giry monad, restricted to probability spaces, yields Markov kernels as its Kleisli morphisms $\Meas((Z, \Sigma_Z), \Measure{X})$, forming the main category of Markov kernels in this paper ($\Stoch$, Definition~\ref{def:stoch}). Describing densities categorically then requires invoking the Radon-Nikodym Theorem, which determines when probability measures have densities. The next two definitions give the Theorem's conditions, which must be satisfied for a density to exist.

Definition~\ref{def:sigma_finite} will formalize the condition that both the base measure and a probability measure consist of sums over countable partitions of the sample space.
% The Radon-Nikodym Theorem for the existence of probability densities requires that both measures under consideration consist of sums of countable partitions of the underlying space, or equivalently, allow some function to have a finite integral. Definition~\ref{def:sigma_finite} states this condition formally.
% Definition~\ref{def:sigma_finite} describes measure kernels whose resulting measure adds up the measure of a countable partition of the underlying space.
\begin{definition}[$\sigma$-finite measure kernel]
\label{def:sigma_finite}
A \emph{$\sigma$-finite measure kernel} $f: Z \times \Sigma_X \rightarrow \extWeight$ is a measure kernel which at every parameter $z \in Z$ splits its codomain into countably many measurable sets $X = \bigcup_{n \in \mathbb{N}} X_n \in \Sigma_X$, each of which has finite measure $f(z)(X_n) < \infty$.
\end{definition}
Definition~\ref{def:absolute_continuity} will now formalize the further requirement that for a probability measure to admit a density function, it must have only the same null-sets as the underlying base measure.
\begin{definition}[Absolute continuity]
\label{def:absolute_continuity}
One $\sigma$-finite measure kernel $f: Z \times \Sigma_X \rightarrow [0, \infty]$ is \emph{absolutely continuous} ($f \ll g$) with respect to another $\sigma$-finite measure kernel over the same codomain $g: Y \times \Sigma_X \rightarrow [0, \infty]$ when $\forall z\in Z, y\in Y, \sigma_X \in \Sigma_X, g(y)(\sigma_X) = 0 \implies f(z)(\sigma_X) = 0$.
\end{definition}
The conditions in Definition~\ref{def:sigma_finite} and Definition~\ref{def:absolute_continuity} are necessary and sufficient for the existence of a probability density via the Radon-Nikodym Theorem, as used in density kernels in Definition~\ref{def:density_kernel}. Density kernels use measure \emph{spaces} as their codomains: these group together the desired topology, dimensionality, and base measure. Definition~\ref{def:measure_space} below formally defines measure spaces, which the paper uses in the specific form of standard Borel measure spaces (Definition~\ref{def:borel_measure_space}).
\begin{definition}[Measure space]
\label{def:measure_space}
A \emph{measure space} is a pair $((X, \Sigma_X), \mu)$ of a measurable space $(X, \Sigma_X) \in Ob(\Meas)$ with a measure $\mu: \Measure{X}$ on that space.
\end{definition}
The measure spaces just defined form objects in a category which Definition~\ref{def:measure_space_cat} describes. Passing from the category of measurable spaces $\Meas$ to the category of measure spaces $\M$ requires the resulting morphisms to respect the chosen measure, so that measurable sets do not ``grow'' or ``shrink''.

Having given the conditions for densities to exist, the paper passes from density kernels to joint density kernels. Definition~\ref{def:s_finite} will give a class of Markov kernels encompassing all those in this paper, particularly joint density kernels.
% the class of pushforwards of $\sigma$-finite measure kernels, including Markov kernels.
\begin{definition}[$s$-finite measure kernel]
\label{def:s_finite}
An \emph{$s$-finite measure kernel} $f: Z \times \Sigma_X \rightarrow \extWeight$ is a measure kernel (as in Definition~\ref{def:measure_kernel} above) which decomposes into a sum of finite kernels $f = \sum_{n \in \mathbb{N}} f_n$ such that $\forall n \in \mathbb{N}, f_n: Z \times \Sigma_X \rightarrow \extWeight$ and $\forall n \in \mathbb{N}, \exists r_n \in \weight, \forall z \in Z, f_n(z, X) \leq r_n$.
\end{definition}
Proposition~\ref{prop:sfkrn_pushforward} will demonstrate that the class of $s$-finite kernels (Definition~\ref{def:s_finite}) includes all pushforwards of $\sigma$-finite kernels, and therefore the pushforwards of all measure kernels admitting densities.
\begin{proposition}[$s$-finite kernels are pushforwards of $\sigma$-finite kernels~\cite{VakarOng2018,Staton2017}]
\label{prop:sfkrn_pushforward}
A measure kernel $f: Z \times \Sigma_X \rightarrow \extWeight$ is $s$-finite if and only if it is a pushforward $f = \mcopy{Z}\fatsemi (p \otimes id_{Z}) \fatsemi k$ of a $\sigma$-finite measure kernel $p$ through a deterministic $k$.
\end{proposition}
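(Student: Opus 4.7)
The plan is to prove the two directions of the equivalence separately, using the natural correspondence between an indexed countable decomposition of a kernel and a covering of the codomain of a single ``tagged'' kernel.

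\emph{From $s$-finite to pushforward.} Given $f = \sum_{n \in \mathbb{N}} f_n$ with $\sup_{z \in Z} f_n(z, X) \leq r_n < \infty$, I would construct the intermediate object $Y := \mathbb{N} \otimes X$ with $\mathbb{N}$ carrying its discrete $\sigma$-algebra, and define $p: Z \leadsto Y$ by $p(z, A) := \sum_{n \in \mathbb{N}} f_n(z, \{x \in X : (n, x) \in A\})$. Measurability of $p$ in $z$ follows from that of each $f_n$, and the partition $\{Y_n := \{n\} \otimes X\}_{n \in \mathbb{N}}$ witnesses $\sigma$-finiteness of $p$, because $p(z, Y_n) = f_n(z, X) \leq r_n$ for every $z$. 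Taking $k : Y \otimes Z \to X$ to be the projection onto the $X$ factor, which is deterministic and measurable, a direct computation of the pushforward gives $(\mcopy{Z} \fatsemi (p \otimes id_Z) \fatsemi k)(z, \sigma_X) = \sum_n f_n(z, \sigma_X) = f(z, \sigma_X)$.

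\emph{From pushforward to $s$-finite.} Letting $Y$ be the codomain of $p$, I would first split $p = \sum_n p_n$ using the covering sets $\{Y_n\}_{n\in\mathbb{N}}$ guaranteed by Definition~\ref{def:sigma_finite}, via $p_n(z, A) := p(z, A \cap Y_n)$. Each $p_n(z, Y)$ is finite but need not be uniformly bounded in $z$, so I further stratify $Z$ into the measurable level sets $Z_{n,\ell} := \{z \in Z : \ell - 1 \leq p_n(z, Y) < \ell\}$ (measurable thanks to the kernel axioms applied to $p_n(\cdot, Y)$) and let $p_{n,\ell}$ agree with $p_n$ on $Z_{n,\ell}$ and vanish elsewhere. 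Each $p_{n,\ell}$ is a finite kernel bounded by $\ell$, and $p = \sum_{n,\ell} p_{n,\ell}$. Pushing each $p_{n,\ell}$ through the deterministic $k$ preserves the uniform bound on total mass, so the resulting countable family $\{f_{n,\ell}\}$ sums to $f$, exhibiting $f$ as $s$-finite.

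The main obstacle is the mismatch between the pointwise finiteness of $\sigma$-finite kernels and the uniform-in-$z$ bound demanded by $s$-finiteness: Definition~\ref{def:sigma_finite} only requires $p(z, Y_n) < \infty$ at each $z$, whereas Definition~\ref{def:s_finite} requires a single constant $r_n$ working for all $z$. The level-set stratification of $Z$ is the standard device to bridge this gap; once in place, measurability of the strata and the preservation of total mass under deterministic pushforward make the remaining verification routine. The forward direction, by contrast, is almost pure bookkeeping, since the indices $n$ of the $s$-finite decomposition can simply be reified as an extra coordinate in the intermediate object.
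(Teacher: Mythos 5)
The paper offers no proof of this proposition at all: it is stated as an imported result and attributed to the cited references, so there is no in-text argument to compare yours against. Your reconstruction is correct and is essentially the standard proof from those sources. The forward direction --- reifying the summation index as an extra coordinate, taking $p: Z \leadsto \mathbb{N} \otimes X$ with $p(z, A) = \sum_{n} f_n(z, \{x : (n,x) \in A\})$ and pushing forward along the projection --- is exactly the classical construction, and your checks of measurability, $\sigma$-finiteness via the slices $\{n\} \otimes X$, and the pushforward identity are sound.

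On the converse you do slightly more work than the cited references need to, and rightly so. You correctly notice that Definition~\ref{def:sigma_finite} as written only gives pointwise finiteness $p(z, Y_n) < \infty$ rather than a bound uniform in $z$, whereas Definition~\ref{def:s_finite} demands a uniform $r_n$; the level-set stratification $Z_{n,\ell}$ is the right device, and the $p_{n,\ell}$ are genuine kernels because $z \mapsto p_n(z, Y)$ is measurable. In the references this step is invisible, since their notion of ``finite kernel'' is uniformly bounded by definition. Two small points you elide: you implicitly read the covering $\{Y_n\}$ as a single, $z$-independent family, which is the only reading under which the proposition is true (a kernel whose individual measures $p(z,\cdot)$ are each $\sigma$-finite need not be $s$-finite), and you should disjointify the $Y_n$ before setting $p_n(z, A) = p(z, A \cap Y_n)$ so that $\sum_n p_n = p$ rather than an overcount. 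Neither affects the substance.
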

The above proposition includes trivial pushforwards, so every $\sigma$-finite (Definition~\ref{def:sigma_finite}) measure kernel is $s$-finite (Definition~\ref{def:s_finite}) but not the other way around.

\section{Parametric and coparametric categories}
\label{sec:para_copara}
This section will review the definitions of parametric and coparametric (bi)categories, first given in the categorical cybernetics literature~\cite{Capucci2021}. For the sake of rigor, the reader can also see a recent review on actegories~\cite{Capucci2022}. As a starting point, Definition~\ref{def:M_actegory} will describe how a symmetric monoidal category (SMC) can ``act upon'' another category functorially.

\begin{definition}[$\M$-actegory]
\label{def:M_actegory}
Consider a symmetric monoidal category $(\M, J, \odot)$ and a category $\C$. An $\M$-actegory is a pair of the two with a functor $\bullet: \M \times \C \rightarrow \C$ from the product category and natural transformations $\varepsilon: J \bullet X \simeq X$ and $\delta: (M \bullet N)\bullet X = M \bullet (N \bullet X)$.
\end{definition}

Definition~\ref{def:para_full} will then apply the actegory concept to define a bicategory whose morphisms accumulate parameters in the course of composition.
\begin{definition}[Parametric categories~\cite{Capucci2021}]
\label{def:para_full}
Given an $\M$-actegory $\C$, the \emph{parametric (bi)category} \(\ParaB{\C}\) has as objects those of $\C$ and as morphisms the pairs \(\ParaB{\C}(A, B) = \left\{ (M, k) \in Ob(\M) \times \C(M \bullet A, B) \right\}\). Composition for morphisms \((M, k): \ParaB{\C}(A, B)\) and \((M', k'): \ParaB{\C}(B, C)\) consists of \((M' \odot M, k' \circ (id_{M'} \bullet k)))\) while identities on objects \(A\) consist of \((I, id_{A})\).
\end{definition}
Parametric (bi)categories of course have a dual, definable as $\ParaB{\C^{op}}^{op}$. Definition~\ref{def:copara_full} will describe this category, whose morphisms admit ``coparameters'' accumulate extra elements of the codomain.
\begin{definition}[Coparametric categories~\cite{Capucci2021}]
\label{def:copara_full}
Given an $\M$-actegory $\C$, the \emph{coparametric category}
\begin{align*}
    \CoParaB{\C} &\in Ob(\mathbf{Cat})
\end{align*}
has as objects those of $\C$ and as morphisms \(\CoParaB{\C}(A, B)\) the pairs \((M, f) \in Ob(\M) \times \C(A, M \bullet B)\). Composition for \((M, f): \CoParaB{\C}(A, B)\) and \((M', g): \CoParaB{\C}(B, C)\) consists of \((M \odot M', (id_{M} \bullet g) \circ f))\) while identities on objects \(A\) consist of \((I, id_{A})\).
\end{definition}
The coparametric category construction generalizes the idea of a writer monad to more than one object, and represents morphisms that ``log'' or ``leave behind'' a cumulative effect. Definition~\ref{def:smc_actegory} will describe symmetric monoidality for the $\M$-actegory on $\C$ when $(\C, \otimes I)$ is symmetric monoidal.
\begin{definition}[Symmetric monoidal $\M$-actegory]
\label{def:smc_actegory}
A \emph{symmetric monoidal $\M$-actegory} is an $\M$-actegory $\C$ equipped with a symmetric monoidal structure and a natural isomorphism $\kappa_{M,X,Y}: M \bullet (X \otimes Y) \simeq X \otimes (M \bullet Y)$, satisfying coherence laws similar to those of a costrong comonad.
\end{definition}

Finally, Proposition~\ref{prop:para_copara_monoidal} will demonstrate that given a symmetric monoidal actegory as in Definition~\ref{def:smc_actegory}, the constructions above admit symmetric monoidal structure themselves.
\begin{proposition}[Parametric and coparametric categories admit monoidal structure~\cite{Capucci2022}\footnote{Example 5.1.8}]
\label{prop:para_copara_monoidal}
Given a symmetric monoidal $\M$-actegory $(\C, \otimes, I)$, the parametric bicategory $\ParaB{\C}$ and coparametric bicategory $\CoParaB{\C}$ form symmetric monoidal bicategories $(\ParaB{\C}, \otimes, I)$ and $(\CoParaB{\C}, \otimes, I)$.
\end{proposition}

\section{Free copy/delete and Markov categories}
\label{sec:free_constructions}

Generalized causal models~\cite{FritzKlingler2023} employ hypergraphs, which ``flip'' the status of nodes and edges relative to ordinary graphs: ``hypernodes'' are drawn as wires and ``hyperedges'' connecting them as boxes. These hypergraphs represent string diagrams combinatorially; restricting hypergraphs to conditions matching certain kinds of categories defines ``free'' categories of those kinds. This subsection will build up free copy/delete and Markov categories with generalized causal models as morphisms.

Definition~\ref{def:hypergraph} defines hypergraphs via sets~\cite{Gallo1993}; Bonchi et al~\cite{Bonchi2016} provides categorical intuition.
\begin{definition}[Hypergraph]
\label{def:hypergraph}
A \emph{hypergraph} is a 4-tuple $(W, B, \mathrm{dom}, \mathrm{cod})$ consisting of a set of vertices, nodes, or ``wires'' $W$; a set of hyperedges or ``boxes'' $B$; a function $\mathrm{dom}: B \rightarrow W^{*}$ assigning a domain to each box; and a function $\mathrm{cod}: B \rightarrow W^{*}$ assigning a codomain to each box.

We abuse notation and write individual boxes $b \in B: \dom{b} \rightarrow \cod{b}$.
\end{definition}
Definition~\ref{def:hyp_morphism} specifies relabelings of one hypergraph's wires and boxes with those of another.
\begin{definition}[Hypergraph morphism]
\label{def:hyp_morphism}
Given hypergraphs $G, H$, a \emph{hypergraph morphism} $\alpha: G \rightarrow H$ is a pair of functions assigning wires to wires and boxes to boxes, the latter respecting the former
\vspace{-0.5em}
\begin{align*}
    \Hyp(G, H) &:= \left\{ (\alpha_W, \alpha_B) \in W(H)^{W(G)} \times B(H)^{B(G)} \mid \forall b \in B(G), \alpha_B(b): \alpha_W(\dom{b}) \rightarrow \alpha_W(\cod{b}) \right\}.
\end{align*}
\end{definition}
As implied by the hom-set notation, hypergraphs and their morphisms form a category $\Hyp$ \cite{Bonchi2016}, and our application will employ the full subcategory $\FinHyp$ in which $W$ and $B$ both have finite cardinality. Finally, a hypergraph $H$ is \emph{discrete} when $B(H) = \emptyset$; $\underline{n}$ denotes a discrete hypergraph with $n\in\mathbb{N}$ wires. Any monoidal category has a (potentially infinite) underlying hypergraph, which we denote $\hyp{\cdot}: \mathbf{MonCat} \rightarrow \Hyp$ following Fritz and Liang~\cite{FritzLiang2023}.

Often a finite hypergraph $\Sigma \in \FinHyp$ denotes the generating objects and morphisms of a free monoidal category, or the primitive types and functions of a domain-specific programming language. We call such a finite hypergraph a \emph{monoidal signature}. Definition~\ref{def:free_cd} formally defines the copy/delete category freely generated by a signature $\Sigma$, which Definition~\ref{def:free_markov} will restrict to free Markov categories.
% The following definition taken from Fritz and Liang~\cite{FritzLiang2023} assumes a monoidal signature for this purpose.
\begin{definition}[Free copy/delete category for the signature $\Sigma$~\cite{FritzLiang2023}]
\label{def:free_cd}
The \emph{free CD category} $\FreeCD{\Sigma}$ for $\Sigma \in \FinHyp$ is a subcategory $\FreeCD{\Sigma} \subseteq \mathbf{cospan}(\FinHyp / \Sigma)$ where
\begin{itemize}
    \item Objects are the pairs $(n, \sigma) \in \mathbb{N} \times \underline{n} \rightarrow \Sigma$ assigning outer wires of a string diagram to wires in $\Sigma$;
    \item Morphisms are isomorphism classes of cospans, given combinatorially
    \vspace{-1em}
    \begin{multline*}
        \FreeCD{\Sigma}((n, \sigma_n), (m, \sigma_m)) = \\ \left\{ p \rightarrow \dom{\tau} \leftarrow q \in \FinHyp(\underline{n}, \dom{\tau}) \times Ob(\FinHyp/\Sigma) \times \FinHyp(\underline{m}, \dom{\tau}) \right\},
    \end{multline*}
    such that $\tau: G \rightarrow \Sigma \in Ob(\FinHyp/\Sigma)$ is a hypergraph morphism from an acyclic $G$ and every wire $w \in W(G)$ has at most one ``starting place'' as the diagram's input or a box's output
    \begin{align*}
        |p^{-1}(w)| + \sum_{b \in B(G)} \sum_{w' \in \cod{b}} \mathbb{I}[w' = w] &\leq 1.
    \end{align*}
\end{itemize}
\end{definition}
Intuitively, a morphism in $\FreeCD{\Sigma}$ is syntax specifying a string diagram with no looping or merging wires, whose boxes and wires are labeled by $\Sigma$. Definition~\ref{def:free_markov} passes to the free Markov category $\FreeMarkov{\Sigma}$ just by syntactically enforcing the naturality of $\mdel{Z}$.
\begin{definition}[Free Markov category for the signature $\Sigma$]
\label{def:free_markov}
The \emph{free Markov category} $\FreeMarkov{\Sigma}$ for $\Sigma \in \FinHyp$ is the wide subcategory of $\FreeCD{\Sigma}$ restricted to morphisms in which every output from every box connects to somewhere else
\vspace{-0.5em}
\begin{align*}
    \mathrm{connects}(w, G, q) &:= \mathbb{I}[\exists b \in B(G):w\in\cod{b} \implies q^{-1}(w) \neq \emptyset \vee \exists b' \in B(G):w\in\dom{b'}]
\end{align*}
\begin{multline*}
    \FreeMarkov{\Sigma}(n, m) := \\ \left\{ p \rightarrow \dom{\tau} \leftarrow q \in \FreeCD{\Sigma}(n, m) \mid \forall w \in W(\dom{\tau}), \mathrm{connects}(w, \dom{\tau}, q) \right\},
\end{multline*}
and with composition redefined to syntactically enforce this by iterating the deletion of discarded boxes to a fixed-point after composition in $\FreeCD{\Sigma}$.
\end{definition}

\end{document}